\newcommand{\set}[1]{\left\{ #1 \right\}}
\newcommand{\paren}[1]{\left( #1 \right)}
\newcommand{\sqparen}[1]{\left[ #1 \right]}
\newcommand{\abs}[1]{\left\vert #1 \right\vert}
\newcommand{\Ex}[2][]{\mathbb{E}_{#1\!}\sqparen{#2}}
\newcommand{\Var}[1]{\operatorname{Var}\sqparen{#1}}
\newcommand{\Cov}[2]{\operatorname{Cov}\sqparen{#1, #2}}
\newcommand{\Unif}[2]{\operatorname{U}\!\paren{#1, #2}}
\newcommand{\Normal}[3][]{
\ifx\hfuzz#1\hfuzz 
\mathcal{N}\paren{#2, #3}
\else
\mathcal{N}\paren{#1; #2, #3}
\fi
}
\newtheorem{theorem}{Theorem}
\newtheorem{lemma}[theorem]{Lemma}
\newtheorem{thm}{Theorem}
\newtheorem{algorithm2}{Algorithm}
\newcounter{problemC}[section]
\begin{document}

\newcommand*{\QEDA}{\hfill\ensuremath{\blacksquare}}

\title{Cardinality Estimation Meets Good-Turing}
\author{
Reuven Cohen~~~Liran Katzir~~~Aviv Yehezkel\\
Department of Computer Science\\
Technion\\
Haifa 32000, Israel\\
}
\date{\today}
\maketitle

% ------------------------------------------------------------------------
% Abstract
% ------------------------------------------------------------------------
\begin{abstract}

Cardinality estimation algorithms receive a stream of elements whose order might be arbitrary, with possible repetitions, and return the number of distinct elements. Such algorithms usually seek to minimize the required storage and processing at the price of inaccuracy in their output. 
Real-world applications of these algorithms are required to process large volumes of monitored data, making it impractical to collect and analyze the entire input stream. In such cases, it is common practice to sample and process only a small part of the stream elements.
This paper presents and analyzes a generic algorithm for combining every cardinality estimation algorithm with a sampling process. We show that the proposed sampling algorithm does not affect the estimator's asymptotic unbiasedness, and we analyze the sampling effect on the estimator's variance.

\end{abstract}

% ------------------------------------------------------------------------
% Introduction
% ------------------------------------------------------------------------
\section{Introduction} \label{sec:intro}

Consider a very long stream of elements $x_1, x_2, x_3, \ldots,$ with repetitions.
Finding the number $n$ of distinct elements is a well-known problem with numerous applications. The elements might represent IP addresses of packets passing through a router~\cite{Giroire2007,Ganguly2007,Metwally:2008}, elements in a large database~\cite{Heule2013}, motifs in a DNA sequence~\cite{Giroire2006}, or nodes of RFID/sensor networks~\cite{Qian2011}. 
One can easily find the exact value of $n$ by comparing the value of a newly encountered element, $x_i$, to every (stored) value encountered so far. If the value of $x_i$ has not been seen before, it is stored as well. After all of the elements are treated, the stored elements are counted. This simple approach does not scale if storage is limited, or if the computation performed for each element $x_i$ should be minimized.
In these cases, the following cardinality estimation problem should be solved:
\refstepcounter{problemC}
\begin{description}\label{prob:cardinality}
\item[The cardinality estimation problem]
\item[Instance:] A stream of elements $x_1, x_2, x_3, \ldots$ with repetitions, and an integer $m$. Let $n$ be the number of different elements, namely $n=\abs{\set{x_1, x_2, x_3, \ldots}}$, and let these elements be $\set{e_1,e_2,\ldots,e_n}$.
\item[Objective:] Find an estimate $\widehat{n}$ of $n$ using only $ m $ storage units, where $m \ll n$.
\end{description}

As an application example, $x_1,x_2,x_3,\ldots$ could be IP packets received by a server. Each packet belongs to one of $n$ IP flows $e_1,e_2,\ldots,e_n$, and the cardinality $n$ represents the number of active flows. 
By monitoring the number of distinct flows during every time period, a router can estimate the network load imposed on the end server and detect anomalies.
For example, it can detect DDoS attacks on the server when the number of flows significantly increases during a short time interval \cite{Estan06,Ganguly2007}.

Several algorithms have been proposed for the cardinality estimation problem \cite{Cosma:2011, CohenK08, Flajolet2007, Giroire2009, Lumbroso2010, Metwally:2008}, all of which were designed to work on the entire stream, namely, without sampling. However, real-world applications are required to process large volumes of monitored data, making it impractical to collect and process the entire stream. For example, this is the case for IP packets received over a high-speed link, because a 100 Gbps link creates a 1 TB log file in less than 1.5 minutes.
In such cases, only a small part of the stream is sampled and processed \cite{Duffield04,Duffield01}.

In this paper we present and analyze a generic algorithm that adds a sampling process into every cardinality estimation procedure. 
The proposed algorithm consists of two steps: (a) cardinality estimation of the sampled stream using any known cardinality estimator; (b) estimation of the sampling ratio. We show that the proposed algorithm does not affect the original estimator's asymptotic bias (accuracy), and we analyze the algorithm's effect on the estimator's variance (precision).

A naive approach to solving the cardinality estimation problem is to estimate the cardinality of the sampled stream and view it as an estimation for the cardinality of the whole (unsampled) stream. However, this approach yields poor results because it ignores the probability of elements that do not appear in the sample.
For example, we simulated a stream of $n=10,000$ distinct elements whose frequency in the stream follows uniform distribution $\sim \Unif{10^2}{10^4}$. We then sampled $0.1 \%$ of the stream and used the HyperLogLog algorithm \cite{Flajolet2007} with $m=200$ storage units to estimate the cardinality of the sample. We repeated this test $200$ times, each on a different stream of $10,000$ distinct elements, and averaged the results. We found that the mean estimated cardinality is $\Ex{\widehat{n}} \approx 9,100 $, which means a bias of $9 \%$, and that the relative variance is $\Var{\frac{\widehat{n}}{n}} \approx 0.0552 $. In contrast, our proposed algorithm computed a mean estimated cardinality of $\Ex{\widehat{n}} \approx 9,900 $, namely a bias of only $1 \%$, and a relative variance of only $\Var{\frac{\widehat{n}}{n}} \approx 0.0118 $.

The rest of this paper is organized as follows. Section \ref{sec:related} discusses previous work. Section \ref{sec:algorithm} presents our first algorithm (Algorithm \ref{alg:cardEstWithSampling}) for combining the sampling process with a generic cardinality estimation procedure. In addition, this section presents an analysis of the asymptotic bias and variance of Algorithm \ref{alg:cardEstWithSampling}. Section \ref{sec:newAlg} presents our enhanced algorithm (Algorithm \ref{alg:cardEstWithSubSampling}), which uses subsampling in order to reduce the memory cost of Algorithm \ref{alg:cardEstWithSampling}. This section also presents an analysis of the asymptotic bias and variance of Algorithm \ref{alg:cardEstWithSubSampling}. 
Section \ref{sec:sim} presents simulation results that validate our analysis in Sections \ref{sec:algorithm} and \ref{sec:newAlg}. 
Finally, Section \ref{sec:conclusion} concludes the paper.

% ------------------------------------------------------------------------
% Related Work
% ------------------------------------------------------------------------
\section{Related Work} \label{sec:related}

Several works address the cardinality estimation problem \cite{Cosma:2011, CohenK08, Flajolet2007, Giroire2009, Lumbroso2010, Metwally:2008} and propose statistical algorithms for solving it. These algorithms are efficient because they make only one pass on the data stream, and because they use a fixed and small amount of storage.
The common approach is to use a random hash function that maps each element $e_j$ into a low-dimensional data sketch $h(e_j)$, which can be viewed as a random variable.
The hash function guarantees that $h(e_j)$ is identical for all the appearances of $e_j$. Thus, the existence of duplicates, i.e., multiple appearances of the same element, does not affect the value of the extreme order statistics.
Let $h$ be a hash function and $h(x_i)$ denote the hash value of $x_i$. Then, an order statistics estimator or a bit pattern estimator can be used to estimate the value of $n$. An order statistics estimator keeps the smallest (or largest) $ m $ hash values. These values are then used to estimate the cardinality \cite{Chassaing2006, CohenK08, Giroire2009, BarYossef2002, Lumbroso2010}. A bit pattern estimator keeps the highest position of the leftmost (or rightmost) ``1'' bit in the binary representation of the hash values in order to estimate the cardinality \cite{Cosma:2011, Flajolet2007}.

Real-world applications of cardinality estimation algorithms are required to process large volumes of monitored data, making it impractical to collect and analyze the entire input stream. In such cases, it is common practice to sample and process only a small part of the stream elements. For example, routers use sampling techniques to achieve scalability. The industry standard for packet sampling is sFlow \cite{sFlow1}, short for ``sampled flow". Using a defined sampling rate $N$, an average of 1 out of $N$ packets is randomly sampled. The flow samples are then sent as sFlow datagrams to a central monitoring server, which analyzes the network traffic. 

Although sampling techniques provide greater scalability, they also make it more difficult to infer the characteristics of the original stream. One of the first works addressing inference from samples is the Good-Turing frequency estimation, a statistical technique for estimating the probability of encountering a hitherto unseen element in a stream, given a set of past samples. For a recent paper on the Good-Turing technique, see \cite{Gale95}.

Several other works have addressed the problem of inference from samples.
For example, the detection of heavy hitters, elements that appear many times in the stream, is studied in \cite{Bhatt07}. The authors propose to keep track of the volume of data that has not been sampled. Then, a new element is skipped only when its effect on the estimation will ``not be too large."
The case where the elements are packets has also been addressed. In such cases, the heavy hitters are called elephants. 
The accuracy of detecting elephant flows is studied in \cite{Mori07} and \cite{Mori04}. The authors use Bayes' theorem for determining the threshold of sampled packets, which indicates whether or not a flow is an elephant in the entire stream.

Other works have dealt with exploiting protocol-level information of sampled packets in order to obtain accurate estimations of the size of flows in the network. 
For example, in \cite{Duffield03} the authors present a TCP-specific method whose estimate is based on the TCP SYN flag in the sampled packets.
Another method, which uses TCP sequence numbers, is presented in \cite{Ribeiro06}.
These methods can also be used to estimate the cardinality of the flows in the network, i.e., the number of active flows. However, both methods are limited to TCP flows. In this paper we present a generic algorithm that does not make any assumptions regarding the type of the input elements. 

Related to the cardinality estimation problem is the problem of finding a uniform sample of the distinct values in the stream. Such a sample can be used for a variety of database management applications, such as query optimization, query monitoring, query progress indication and query execution time prediction \cite{Babcock03,Chaudhuri01,Chaudhuri07}. Additional applications of the uniform sample pertain to approximate query answering, such as estimating the mean, the variance, and the quantiles over the distinct values of the query \cite{Acharya00,Acharya99,Gibbons98}.
Several algorithms provide a uniform sample of the stream; for example, the authors of \cite{Gibbons01} show how to find such a sample in a single data pass. Several variations of this work are also proposed in \cite{Cormode05,Frahling08,Ganguly07}. However, all the discussed approaches require scanning the entire input stream, which is usually impractical. In this paper we present a generic algorithm that does not require a full data pass over the input stream.

The above works consider uniform packet sampling, where each packet is sampled with a fixed probability. 
Previous works have also dealt with size-dependent flow sampling, where packets are sampled with different probability, according to their flow size. 
The first works on size-dependent flow sampling study the problem of deciding which records in a given set of flow records should be discarded when storage constraints allow only a small fraction to be kept \cite{DuffieldL03,Duffield01,DuffieldLT04}. 
The sampling decision in these works is made off-line: a flow is first received and only then discarded or stored. In \cite{Kumar06}, the on-line version of this problem is studied. In this version, upon receiving a packet, the algorithm needs to determine whether to keep it. The authors develop a new packet sampling method that samples each packet with probability $f(\widehat{s})$, where $f$ is a decreasing function of the estimated size of the corresponding flow when the packet is received, and the size of the flow is estimated using a small sketch that stores the approximate sizes of all flows.

% ------------------------------------------------------------------------
% Algorithm
% ------------------------------------------------------------------------
\section{Cardinality Estimation with Sampling}\label{sec:algorithm}

\subsection{Preliminaries: Good-Turing Frequency Estimation} \label{sub:pre}

The Good-Turing frequency estimation technique is useful in many language-related tasks where one needs to determine the probability that a word will appear in a document. 

Let $X=\set{x_1,x_2,x_3,\ldots}$ be a stream of elements, and let $E$ be the set of all different elements $E=\set{e_1,e_2,\ldots,e_n}$, such that $x_i \in E$. 
Suppose that we want to estimate the probability $\pi(e_j)$ that a randomly chosen element from $X$ is $e_j$. A naive approach is to choose a sample $Y = \set{y_1,y_2,\ldots,y_l}$ of $l$ elements from $X$, and then to let $\pi(e_j) = \frac{\#(e_j)}{l}$, where $\#(e_j)$ denotes the number of appearances of $e_j$ in $Y$. However, this approach is inaccurate, because for each element $e_j$ that does not appear in $Y$ even once (an ``unseen element"), $\#(e_j)=0$, and therefore $\pi(e_j)=0$. 

Let $E_i=\set{e_j | \#(e_j)=i}$ be the set of elements that appear $i$ times in the sample $Y$. Thus, $\sum{\abs{E_i} \cdot i}=l$. The Good-Turing frequency estimation claims that $\widehat{P_i}=(i+1)\frac{\abs{E_{i+1}}}{l}$ is a consistent estimator for the probability $P_i$ that an element of $X$ appears in the sample $i$ times.

For the special case of $P_0$, we get from Good-Turing that $\widehat{P_0}=\abs{E_1}/l$. In other words, the hidden mass $P_0$ can be estimated by the relative frequency of the elements that appear exactly once in the sample $Y$. For example, if $1/10$ of the elements in $Y$ appear only once in $Y$, then approximately $1/10$ of the elements in $X$ do not appear in $Y$ at all (i.e., they are unseen elements).

\subsection{The Proposed Algorithm} \label{sub:algo}

We now show how to use Good-Turing in order to combine a sampling process with a generic cardinality estimation procedure, referred to as Procedure 1. 
As before, let $X=\set{x_1,x_2,x_3,\ldots}$ be the entire stream of elements, and let $Y=\set{y_1, y_2, \ldots, y_l}$ be the sampled stream.
Assume that the sampling rate is $P$, namely, $1/P$ of the elements of $X$ are sampled into $Y$. Let $n$ and $n_s$ be the number of distinct elements in $X$ and $Y$ respectively. The algorithm receives the sampled stream $Y$ as an input and returns an estimate for $n$.
The algorithm consists of two steps: (a) estimating $n_s$ using Procedure 1 (any procedure, such as in \cite{Cosma:2011, Flajolet2007, Giroire2009, Lumbroso2010}); (b) estimating $n/{n_s}$, the factor by which to multiply the cardinality $n_s$ of the sampled stream in order to estimate the cardinality $n$ of the full stream.

To estimate $n_s$ in step (a), Procedure 1 is invoked using $m$ storage units. To estimate $n/{n_s}$ in step (b), we note that $P_0 = (n-n_s)/n$ and thus $1/(1-P_0) = n/{n_s}$. Therefore, the problem of estimating $n/{n_s}$ is reduced to estimating the probability $P_0$ of unseen elements.
As indicated above, by Good-Turing, $\widehat{P_0}=\abs{E_1}/l$ is a consistent estimator for $P_0$. Thus, we only need to find the number $\abs{E_1}$ of elements that appear exactly once in the sampled stream $Y$. To compute the value of $\abs{E_1}$ precisely, one should keep track of all the elements in $Y$ and ignore each previously encountered element. This is done by Algorithm \ref{alg:cardEstWithSampling} below using $O(l)$ storage units. We later show (Algorithm \ref{alg:cardEstWithSubSampling} in Section \ref{sec:newAlg}) that the number of storage units can be reduced by estimating the value of $\abs{E_1}/l$.
\begin{algorithm2}
\textbf{\newline (cardinality estimation with sampling)}
\label{alg:cardEstWithSampling}
\begin{enumerate}
\item [(a)] Estimate the number $n_s$ of distinct elements in the sample $Y$ by invoking a cardinality estimation procedure (Procedure 1) on this sample using $m$ storage units.
\item [(b)] Determine the ratio $n/{n_s}$ by computing $\frac{1}{1-\widehat{P_0}}$, where $\widehat{P_0} = \abs{E_1}/l$. The value of $\abs{E_1}$ is computed precisely and $l$ is known.
\item [(c)] Return $\widehat{n} = \widehat{n_s} \cdot \widehat{n/{n_s}}$ as an estimator for the cardinality of the entire stream $X$.
\end{enumerate}
\end{algorithm2}

\subsection{Analysis of Algorithm \ref{alg:cardEstWithSampling}} \label{sub:analysis}

In this section we analyze the asymptotic bias and variance of Algorithm \ref{alg:cardEstWithSampling}, assuming that the HyperLogLog algorithm \cite{Flajolet2007} is used as Procedure 1.
This algorithm is the best known cardinality estimator and it has a relative variance of $ \Var{\frac{\widehat{n}}{n}} \approx 1.08/m $, where $ m $ is the number of used storage units.
Our main result is Theorem \ref{thm1}, where we prove that the sampling does not affect the estimator's asymptotic unbiasedness, and we show the effect of the sampling rate $P$ on the estimator's variance.

We start with three preliminary lemmas. The first lemma shows how to compute the probability distribution of a random variable that is a product of two normally distributed random variables whose covariance is $0$:
\begin{lemma}[Product distribution]\label{lemma:productDis} \ \\
Let $X$ and $Y$ be two random variables satisfying $X \to \Normal{\mu_x}{\sigma_x^2}$ and $Y \to \Normal{\mu_y}{\sigma_y^2}$, such that $\Cov{X}{Y}=0$.
Then, the product $X \cdot Y$ asymptotically satisfies the following:
\begin{equation*}
X\cdot Y \to \Normal{\mu_x\mu_y}{\mu_y^2\sigma_x^2 + \mu_x^2\sigma_y^2} \text{.}
\end{equation*}
\end{lemma}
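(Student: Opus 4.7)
The plan is to invoke the bivariate Delta method, i.e., a first-order Taylor expansion of the smooth function $f(x,y) = xy$ about the mean point $(\mu_x,\mu_y)$, exploiting the fact that $(X,Y)$ is concentrated near its mean in the asymptotic regime where the lemma is applied. Writing $X = \mu_x + \delta_x$ and $Y = \mu_y + \delta_y$ with $\delta_x, \delta_y$ denoting the centered fluctuations (variances $\sigma_x^2$ and $\sigma_y^2$, zero covariance), the identity
\begin{equation*}
X \cdot Y \;=\; \mu_x \mu_y \;+\; \mu_y \delta_x \;+\; \mu_x \delta_y \;+\; \delta_x \delta_y
\end{equation*}
splits the product into an affine function of $(\delta_x,\delta_y)$ plus a quadratic remainder.

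Taking expectation of this identity immediately yields $\Ex{X \cdot Y} = \mu_x \mu_y$, since $\Ex{\delta_x} = \Ex{\delta_y} = 0$ and, by $\Cov{X}{Y} = 0$, also $\Ex{\delta_x \delta_y} = 0$. For the variance, I would compute
\begin{equation*}
\Var{\mu_y \delta_x + \mu_x \delta_y} \;=\; \mu_y^2 \sigma_x^2 \;+\; \mu_x^2 \sigma_y^2 \;+\; 2 \mu_x \mu_y \Cov{\delta_x}{\delta_y},
\end{equation*}
and kill the cross term via the zero-covariance hypothesis, obtaining exactly $\mu_y^2 \sigma_x^2 + \mu_x^2 \sigma_y^2$. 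Equivalently, this is the Delta-method formula $\nabla f(\mu_x,\mu_y)^\top \Sigma \, \nabla f(\mu_x,\mu_y)$ with $\nabla f = (y,x)$ and $\Sigma = \operatorname{diag}(\sigma_x^2, \sigma_y^2)$. Asymptotic normality of the product then follows because the linearized part is a linear combination of jointly asymptotically normal variables (hence itself Gaussian in the limit), and a Slutsky-type argument absorbs the quadratic remainder $\delta_x \delta_y$ into the limit as a lower-order term.

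The main obstacle is making precise the informal arrow notation in the lemma: one needs to interpret it as convergence in distribution along a sequence where the centered, scaled versions of $X$ and $Y$ are tight and Gaussian, and to verify that $\delta_x \delta_y$ is of strictly lower order than $\mu_y \delta_x + \mu_x \delta_y$ in that regime. In the setting of interest in this paper, the estimators underlying $X$ and $Y$ have variances that shrink with the number of storage units $m$, so the standard Delta-method hypotheses are satisfied and the negligibility of the quadratic term goes through; the rest of the proof is then the routine computation outlined above.
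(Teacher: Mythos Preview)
Your argument via the bivariate Delta method is correct and is exactly the standard route to this result. The paper itself does not supply a proof: it simply cites \cite{Shao1998} and moves on. What you have written is essentially the argument one finds in that reference, so there is no discrepancy to discuss.
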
 
A proof is given in \cite{Shao1998}.

The next lemma, known as the Delta Method, can be used to compute the probability distribution for a function of an asymptotically normal estimator using the estimator's variance:
\begin{lemma}[Delta Method] \label{lemma:deltaMethod} \ \\
Let $\theta_m$ be sequence of random variables satisfying $\sqrt{m}(\theta_m - \theta) \to \Normal{0}{\sigma^2}$, where $\theta$ and $\sigma^2$ are finite valued constants. Then, for every function $g$ for which $g^\prime(\theta)$ exists and $g^\prime(\theta) \neq 0$, the following holds:
\begin{equation*}
\sqrt{m}(g(\theta_m) - g(\theta)) \to \Normal{0}{\sigma^2 {g^\prime(\theta)}^2}\text{.}
\end{equation*}
\end{lemma}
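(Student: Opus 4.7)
The plan is to prove Lemma \ref{lemma:deltaMethod} via a first-order Taylor expansion of $g$ around $\theta$, combined with Slutsky's theorem to handle the remainder. First, I would observe that the hypothesis $\sqrt{m}(\theta_m - \theta) \to \Normal{0}{\sigma^2}$ already forces the consistency $\theta_m \to \theta$ in probability, since multiplying a convergent-in-distribution sequence by $1/\sqrt{m} \to 0$ yields convergence in probability to $0$ by Slutsky.

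Next, I would invoke the differentiability of $g$ at $\theta$ to write
\begin{equation*}
g(\theta_m) = g(\theta) + g^\prime(\theta)(\theta_m - \theta) + R(\theta_m),
\end{equation*}
where the remainder satisfies $R(t)/(t - \theta) \to 0$ as $t \to \theta$. Rearranging and multiplying through by $\sqrt{m}$ gives
\begin{equation*}
\sqrt{m}\paren{g(\theta_m) - g(\theta)} = g^\prime(\theta) \cdot \sqrt{m}(\theta_m - \theta) + \sqrt{m}\, R(\theta_m).
\end{equation*}
The main step is then to argue that the remainder term $\sqrt{m}\, R(\theta_m)$ vanishes in probability. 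I would rewrite it as
\begin{equation*}
\sqrt{m}\, R(\theta_m) = \frac{R(\theta_m)}{\theta_m - \theta} \cdot \sqrt{m}(\theta_m - \theta),
\end{equation*}
so that the first factor tends to $0$ in probability (by consistency of $\theta_m$ and the defining property of $R$), while the second factor is asymptotically $\Normal{0}{\sigma^2}$; applying Slutsky's theorem to this product makes the remainder negligible.

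Finally, I would apply Slutsky's theorem once more to the leading term: since $g^\prime(\theta)$ is a fixed constant and $\sqrt{m}(\theta_m - \theta) \to \Normal{0}{\sigma^2}$, the product converges in distribution to $\Normal{0}{\sigma^2 (g^\prime(\theta))^2}$, and combining with the negligible remainder yields the stated conclusion. The only subtle point will be the remainder argument, which requires the assumption that $\theta_m$ lies in a neighborhood of $\theta$ with high probability, so that the local Taylor expansion is valid; this is exactly what the consistency of $\theta_m$ supplies. The non-vanishing assumption $g^\prime(\theta) \neq 0$ is not needed for existence of the limit, but is needed to guarantee that the limiting distribution is genuinely non-degenerate.
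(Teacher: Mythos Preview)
Your argument is correct and is precisely the standard textbook proof of the Delta Method via first-order Taylor expansion combined with Slutsky's theorem. The paper does not supply its own proof of this lemma; it simply cites \cite{Shao1998}, where essentially the same argument you outline appears. So there is no discrepancy to report: your proof is what the cited reference contains.
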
 
A proof is given in \cite{Shao1998}.

The last lemma states a normal limit law for the estimation of $\abs{E_1}/l$, where $\abs{E_1}$ and $l$ are as described in Section \ref{sub:pre}:
\begin{lemma}[Random Sample's Coverage] \label{lemma:N1acc} \ \\
$\widehat{\abs{E_1}/l} \to \Normal{\abs{E_1}/l}{\frac{1}{l}((\abs{E_1}+2\abs{E_2})/l - (\abs{E_1}/l)^2)}$.
\end{lemma}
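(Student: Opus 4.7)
The plan is to model the sample $Y$ as $l$ i.i.d.\ draws from the distribution on $X$, equivalently a multinomial draw with $l$ trials and cell probabilities $\pi_1,\ldots,\pi_n$ on the distinct elements $e_1,\ldots,e_n$; to write $\abs{E_1}$ as a sum of indicator random variables; to compute its mean and variance directly from the multinomial formulas; and finally to invoke a central limit theorem to obtain the stated normal limit.

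First I would decompose $\abs{E_1}=\sum_{j=1}^n I_j$, where $I_j=\mathbf{1}\{C_j=1\}$ and $C_j$ is the multiplicity of $e_j$ in $Y$. The multinomial marginals are $\Ex{I_j}=l\pi_j(1-\pi_j)^{l-1}$ and the pairwise joints are $\Ex{I_jI_k}=l(l-1)\pi_j\pi_k(1-\pi_j-\pi_k)^{l-2}$ for $j\ne k$. Summing the marginals gives $\Ex{\abs{E_1}}$, and the standard variance decomposition reads
\begin{equation*}
\Var{\abs{E_1}} \;=\; \Ex{\abs{E_1}} \;-\; \sum_{j}\Ex{I_j}^2 \;+\; \sum_{j\ne k}\bigl(\Ex{I_jI_k}-\Ex{I_j}\Ex{I_k}\bigr).
\end{equation*}

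Next I would identify, after dividing by $l^2$, the right-hand side with the stated expression $\frac{1}{l}\bigl((\abs{E_1}+2\abs{E_2})/l-(\abs{E_1}/l)^2\bigr)$. The diagonal term yields the $\abs{E_1}/l^2$ piece. For the cross term, a Taylor expansion of $(1-\pi_j-\pi_k)^{l-2}$ around $(1-\pi_j)^{l-1}(1-\pi_k)^{l-1}$ cancels the leading $\Theta(l^2\pi_j\pi_k)$ parts of $\Ex{I_jI_k}$ against $\Ex{I_j}\Ex{I_k}$. The surviving residual, combined with $-\sum_j\Ex{I_j}^2$, equals exactly $2\Ex{\abs{E_2}}-\Ex{\abs{E_1}}^2/l$ via the identity $2\Ex{\abs{E_2}}=\sum_j l(l-1)\pi_j^2(1-\pi_j)^{l-2}$. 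Replacing expectations by their empirical plug-in counterparts produces the claimed variance.

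Asymptotic normality of $\abs{E_1}/l$ then follows from a multivariate CLT applied to the multinomial count vector (the $I_j$'s are bounded and coupled only through the single sum-to-$l$ constraint), followed by Lemma~\ref{lemma:deltaMethod} with $g(x)=x/l$ to read off the marginal law of the scalar statistic $\abs{E_1}/l$. The main obstacle is the algebraic cancellation in the covariance sum: individually, each of $\Ex{I_jI_k}$ and $\Ex{I_j}\Ex{I_k}$ is of order $l^2\pi_j\pi_k$, so producing the $O(1)$ residual of the specific shape $2\Ex{\abs{E_2}}-\Ex{\abs{E_1}}^2/l$ requires careful tracking of the second-order Taylor terms in the joint probability and recognition of the resulting sum against the $\pi_j$ profile as precisely $2\Ex{\abs{E_2}}$; once this identification is established, both the variance formula and the CLT step are routine.
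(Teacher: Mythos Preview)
The paper does not prove this lemma at all; it simply cites Esty (1983) for the result. Your proposal therefore goes well beyond what the paper does, attempting to sketch the argument behind Esty's theorem.

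Your variance decomposition and the identification of $\Var{\abs{E_1}}$ with $\Ex{\abs{E_1}}+2\Ex{\abs{E_2}}-\Ex{\abs{E_1}}^2/l$ (up to lower order) are correct in outline, and this is indeed the substance of the variance formula. However, your CLT step contains a genuine gap. You propose to apply a multivariate CLT to the multinomial count vector $(C_1,\ldots,C_n)$ and then recover the law of $\abs{E_1}/l$ via the delta method with $g(x)=x/l$. But $\abs{E_1}=\sum_j\mathbf{1}\{C_j=1\}$ is a \emph{discontinuous} function of the counts, so Lemma~\ref{lemma:deltaMethod} does not apply; and the map $g(x)=x/l$ is merely a rescaling, not a device for extracting $\abs{E_1}$ from the count vector. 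Nor can one appeal to a CLT for the $I_j$'s directly: they are neither independent nor identically distributed, and the parenthetical remark that they are ``coupled only through the single sum-to-$l$ constraint'' conflates the indicators $I_j$ with the raw counts $C_j$.

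Esty's actual argument handles asymptotic normality by a route specific to occupancy statistics---for instance Poissonization (replace the fixed sample size $l$ by an independent Poisson$(l)$, making the $C_j$ independent Poisson$(l\pi_j)$ and hence the $I_j$ independent Bernoulli, so that a Lindeberg-type CLT applies) followed by de-Poissonization, or a direct method-of-moments limit. The moment algebra you flag as the main obstacle is in fact the routine half; the hard half is justifying normality for this dependent, non-identically-distributed sum, and that requires substantially more than the delta method.
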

A proof is given in \cite{Esty83}.

We are now ready to start our analysis. Our first lemma summarizes the distribution of $P_0$:
\begin{lemma} \label{lemma:p0acc} \ \\
$\widehat{P_0} \to \Normal{P_0}{\frac{1}{l}\Big(P_0(1-P_0) + P_1\Big)}$, where $l$ is the sample size.
\end{lemma}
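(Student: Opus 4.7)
The plan is to derive the claim almost directly from Lemma \ref{lemma:N1acc} by identifying the quantities $\abs{E_1}/l$ and $\abs{E_2}/l$ with the Good-Turing estimators for $P_0$ and $P_1$.

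First, I would note that, by the definition used in step (b) of Algorithm \ref{alg:cardEstWithSampling}, $\widehat{P_0} = \abs{E_1}/l$, so Lemma \ref{lemma:N1acc} already hands us a normal limit law for $\widehat{P_0}$:
\begin{equation*}
\widehat{P_0} \to \Normal{\abs{E_1}/l}{\tfrac{1}{l}\bigl((\abs{E_1}+2\abs{E_2})/l - (\abs{E_1}/l)^2\bigr)}\text{.}
\end{equation*}
The mean $\abs{E_1}/l$ converges to $P_0$ by Good-Turing (consistency of $\widehat{P_0}$), which justifies replacing it by $P_0$ in the limit law.

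Second, I would treat the variance term. The quantity $2\abs{E_2}/l$ is precisely the Good-Turing estimator $\widehat{P_1}$ from Section \ref{sub:pre}, which is a consistent estimator for $P_1$, so asymptotically $2\abs{E_2}/l \to P_1$. Similarly $\abs{E_1}/l \to P_0$. Substituting these limits into the variance expression yields
\begin{equation*}
\tfrac{1}{l}\bigl(P_0 + P_1 - P_0^2\bigr) = \tfrac{1}{l}\bigl(P_0(1-P_0) + P_1\bigr)\text{,}
\end{equation*}
which is exactly the claimed variance.

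The only subtlety, and the place that requires a brief justification rather than a routine substitution, is the replacement of the empirical quantities $\abs{E_1}/l$ and $2\abs{E_2}/l$ appearing inside the variance of Lemma \ref{lemma:N1acc} by their probabilistic limits $P_0$ and $P_1$. This step is legitimate asymptotically because of the consistency of the Good-Turing estimators, so the error introduced is of lower order and does not affect the limiting normal distribution (e.g., by Slutsky's theorem applied to the standardized statistic $\sqrt{l}(\widehat{P_0}-P_0)$). I do not expect any genuinely hard step beyond this identification; the work is really in recognizing that the variance expression in Lemma \ref{lemma:N1acc} is already built out of Good-Turing estimators and simplifying accordingly.
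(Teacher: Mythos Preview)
Your proposal is correct and follows essentially the same route as the paper: invoke the definition $\widehat{P_0}=\abs{E_1}/l$, apply Lemma~\ref{lemma:N1acc} to obtain the normal limit with mean $\abs{E_1}/l$ and variance $\tfrac{1}{l}\bigl((\abs{E_1}+2\abs{E_2})/l-(\abs{E_1}/l)^2\bigr)$, and then replace $\abs{E_1}/l$ and $2\abs{E_2}/l$ by $P_0$ and $P_1$ via Good-Turing consistency. Your explicit appeal to Slutsky's theorem for the variance substitution is slightly more careful than the paper's own argument, which simply asserts the limit ``due to Good-Turing.''
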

\begin{proof} \ \\
For the expectation, the following holds
\begin{align*}
\Ex{\widehat{P_0}} = \Ex{\abs{E_1}/l} = \abs{E_1}/l
\text{.}
\end{align*}
The first equality is due to the definition of $\widehat{P_0}$ in Algorithm \ref{alg:cardEstWithSampling}, and the second is because $\abs{E_1}$ and $l$ are constants. By Good-Turing we get that $\abs{E_1}/l \to P_0$.

For the variance, the following holds
\begin{align*}
\Var{\widehat{P_0}} = \Var{\abs{E_1}/l} = 1/l \cdot ((\abs{E_1}+2\abs{E_2})/l - (\abs{E_1}/l)^2) 
\text{.}
\end{align*}
The first equality is due to the definition of $\widehat{P_0}$ in Algorithm \ref{alg:cardEstWithSampling}. The second equality is due to Lemma \ref{lemma:N1acc}. 
Finally, due to Good-Turing we get that $1/l \cdot ((\abs{E_1}+2\abs{E_2})/l - (\abs{E_1}/l)^2) \to \frac{1}{l}\Big(P_0(1-P_0) + P_1\Big)$.
\end{proof}

As shown in \cite{Flajolet2007}, when sampling is not used, Procedure 1 estimates $n$ with mean value $n$ and variance $\frac{n^2}{m}$, namely, $\widehat{n} \to \Normal{n}{\frac{n^2}{m}}$. The following theorem states the asymptotic bias and variance of Algorithm \ref{alg:cardEstWithSampling} for $P<1$.

\begin{thm} \ \\
\label{thm1}
Algorithm \ref{alg:cardEstWithSampling} estimates $n$ with mean value $n$ and variance $\frac{n^2}{l}\frac{P_0(1-P_0)+P_1}{(1-P_0)^2} + \frac{n^2}{m}$, namely, $\widehat{n} \to \Normal{n}{\frac{n^2}{l}\frac{P_0(1-P_0)+P_1}{(1-P_0)^2} + \frac{n^2}{m}}$, where $l$ is the sample size, and $m$ is the storage size used for estimating $n_s$. 
In addition, $P_0$ and $P_1$ satisfy:
\begin{enumerate}
\item $\Ex{P_0} = \frac{1}{n}\sum_{i=1}^{n}{e^{-P\cdot f_i}}$.
\item $\Ex{P_1} = \frac{P}{n}\sum_{i=1}^{n}{f_i \cdot e^{-P\cdot f_i}}$
\end{enumerate}
where $f_i$ is the frequency of element $e_i$ in $X$.
\end{thm}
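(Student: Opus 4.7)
The plan is to decompose $\widehat{n}=\widehat{n_s}\cdot g(\widehat{P_0})$ with $g(p)=1/(1-p)$, establish asymptotic normality for each factor separately, and then combine them with the product-of-normals lemma. The HyperLogLog analysis of \cite{Flajolet2007} already yields $\widehat{n_s}\to\mathcal{N}(n_s,\,n_s^2/m)$, so the only new distributional work is to feed Lemma \ref{lemma:p0acc} into the Delta Method (Lemma \ref{lemma:deltaMethod}) with $g'(p)=1/(1-p)^2$, which produces
\[
g(\widehat{P_0})\;\to\;\mathcal{N}\!\left(\frac{1}{1-P_0},\;\frac{P_0(1-P_0)+P_1}{l\,(1-P_0)^4}\right).
\]

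Invoking Lemma \ref{lemma:productDis} with $\mu_x=n_s$, $\sigma_x^2=n_s^2/m$, $\mu_y=1/(1-P_0)$, and $\sigma_y^2=\frac{P_0(1-P_0)+P_1}{l\,(1-P_0)^4}$, and using the identity $n_s=n(1-P_0)$ that is immediate from $P_0=(n-n_s)/n$, the mean $\mu_x\mu_y$ collapses to $n$ (giving asymptotic unbiasedness) while $\mu_y^2\sigma_x^2+\mu_x^2\sigma_y^2$ simplifies to $\frac{n^2}{m}+\frac{n^2}{l}\cdot\frac{P_0(1-P_0)+P_1}{(1-P_0)^2}$, exactly the variance claimed by the theorem. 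Both substitutions are routine algebra and need not be carried out here.

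For the closed-form expectations of $P_0$ and $P_1$, I would model uniform sampling at rate $P$ by letting each occurrence of $e_i$ survive independently with probability $P$, so that the number of retained copies of $e_i$ is well approximated by a Poisson variable with mean $Pf_i$. The probability that $e_i$ is unseen in $Y$ is then $e^{-Pf_i}$ and the probability that it appears exactly once is $Pf_i\,e^{-Pf_i}$. Averaging these per-element probabilities uniformly over the $n$ distinct elements, consistent with the definition of $P_i$ in Section \ref{sub:pre}, yields the two stated formulas $\Ex{P_0}=\frac{1}{n}\sum_i e^{-Pf_i}$ and $\Ex{P_1}=\frac{P}{n}\sum_i f_i\,e^{-Pf_i}$.

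The main obstacle is justifying the zero-covariance hypothesis required by Lemma \ref{lemma:productDis}, since both $\widehat{n_s}$ and $\widehat{P_0}$ are derived from the same sampled stream $Y$. I would argue that $\widehat{P_0}=|E_1|/l$ depends on $Y$ only through the multiplicities of its distinct elements, whereas the HyperLogLog estimator depends on the hash values assigned to those distinct elements; because the hash function is drawn independently of the sampling mechanism, these two sources of randomness are independent. In the asymptotic regime where $m,l\to\infty$, the residual coupling through the identity of the distinct elements contributes a cross-term of lower order than the two diagonal terms, so the hypothesis of Lemma \ref{lemma:productDis} is satisfied in the limit.
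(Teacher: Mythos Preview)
Your overall architecture---Delta Method on $g(p)=1/(1-p)$ applied to Lemma~\ref{lemma:p0acc}, HyperLogLog normality for $\widehat{n_s}$, then Lemma~\ref{lemma:productDis} followed by the substitution $n_s=n(1-P_0)$---is exactly the route the paper takes, and your algebraic simplifications match theirs line for line.

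There are two places where your argument diverges from the paper's in detail. For the zero-covariance hypothesis of Lemma~\ref{lemma:productDis}, the paper does not appeal to independence of the hash randomness from the sampling multiplicities as you do; instead it attempts a direct computation of $\Cov{n_s}{1/(1-P_0)}=\Cov{n_s}{n/n_s}$ via the law of total covariance, conditioning on $n_s$, and concludes it vanishes. Your argument is arguably cleaner and more to the point, since what Lemma~\ref{lemma:productDis} actually needs is the asymptotic uncorrelatedness of the two \emph{estimators} $\widehat{n_s}$ and $1/(1-\widehat{P_0})$, and your observation that one depends only on hash values of the distinct elements while the other depends only on sample multiplicities addresses that directly. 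For the closed forms of $\Ex{P_0}$ and $\Ex{P_1}$, the paper starts from the exact binomial expressions $(1-P)^{f_i}$ and $f_iP(1-P)^{f_i-1}$ and then passes to the limit via $(1-P)^{1/P}\to e^{-1}$ as $P\to 0$, whereas you jump straight to the Poisson approximation; the two are equivalent in the regime $P\to 0$, so nothing is lost.
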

\begin{proof} \ \\
Applying the Delta Method (Lemma \ref{lemma:deltaMethod}) on $1-P_0$ yields that
\begin{equation}\label{eq:1minus}
\frac{1}{1-\widehat{P_0}} \to \Normal{\frac{1}{1-P_0}}{\frac{1}{l}\frac{P_0(1-P_0) + P_1}{(1-P_0)^4}} \text{.}
\end{equation}
According to \cite{Cosma:2011}:
\begin{equation}\label{eq:v}
\widehat{n}_s \to \Normal{n_s}{\frac{n_s^2}{m}} \text{.}
\end{equation}
Next we show that $\frac{1}{1-P_0}$ and $n_s$ have zero covariance: 
\begin{align*}
\Cov{n_s}{\frac{1}{1-P_0}} &= \Cov{n_s}{\frac{n}{n_s}} \\
&= \Ex{\Cov{n_s}{\frac{n}{n_s} \mid n_s}} + \Cov{\Ex{n_s \mid n_s}}{\Ex{\frac{n}{n_s} \mid n_s}} \\
&= 0 + \Cov{n_s}{\frac{n}{n_s}} \\
&= \Ex{n_s \cdot \frac{n}{n_s}} - \Ex{n_s} \Ex{\frac{n}{n_s}} \\
&= \Ex{n} - n_s \cdot \frac{n}{n_s} \\
&= n - n = 0 \text{.}
\end{align*}
The first equality is due to the $P_0$ definition. The second equality is due to the law of total covariance. The third equality is because $n_s$ and $n/n_s$ are independent when $n_s$ is known. The fourth equality is due to the covariance definition. The fifth and sixth equalities are due to the expectation definition and algebraic manipulations.

Applying the distribution product property (Lemma \ref{lemma:productDis}) for Eqs. (\ref{eq:1minus}) and (\ref{eq:v}) yields that:
\begin{equation*}\label{eq:total}
\widehat{n} = \frac{\widehat{n}_s}{1-\widehat{P_0}} \to \Normal{n}{\frac{n_s^2}{l}\frac{P_0(1-P_0)+P_1}{(1-P_0)^4} + \frac{n_s^2}{m}\frac{1}{(1-P_0)^2}} \text{.}
\end{equation*}
Finally, substituting $n_s = n \cdot (1-P_0)$ yields that:
\begin{equation*}\label{eq:total}
\widehat{n} \to \Normal{n}{\frac{n^2}{l}\frac{P_0(1-P_0)+P_1}{(1-P_0)^2} + \frac{n^2}{m}} \text{.}
\end{equation*}

The resulting asymptotic variance depends on both $P_0$ and $P_1$, which are determined according to the sampling rate $P$ and $f_i$, the frequency of each distinct element in the stream. Thus, the final part of the proof is to compute their expectation.
For $P_0$ we get that:
\begin{equation*}
\Ex{P_0} = \frac{1}{n}\sum_{i=1}^{n}{(1-P)^{f_i}} = \frac{1}{n}\sum_{i=1}^{n}{((1-P)^{1/P})^{P \cdot f_i}} = \frac{1}{n}\sum_{i=1}^{n}{(e^{-1})^{P \cdot f_i}} = \frac{1}{n}\sum_{i=1}^{n}{e^{-P \cdot f_i}} \text{.}
\end{equation*}
The first equality is due to the expectation and $P_0$ definitions. The second and the last equalities are due to algebraic manipulations. The third equality is due to the known limit result where $(1-x)^{1/x} \to e^{-1}$ when $x \to 0$ (in our case $P \to 0$).

For $P_1$ we get that:
\begin{equation*}
\Ex{P_1} = \frac{1}{n}\sum_{i=1}^{n}{f_i \cdot P(1-P)^{f_i-1}} = \frac{1}{n}\sum_{i=1}^{n}{f_i \cdot P((1-P)^{1/P})^{P \cdot (f_i-1)}} = \frac{P}{n}\sum_{i=1}^{n}{f_i \cdot e^{-P \cdot f_i}} \text{.}
\end{equation*}
The first equality is due to the expectation and $P_1$ definitions. The second and third equalities are due to algebraic manipulations and the same known limit result noted above. 
\end{proof}

% ------------------------------------------------------------------------
% Enhanched Algorithm
% ------------------------------------------------------------------------
\section{Reducing the Computational Cost of Algorithm \ref{alg:cardEstWithSampling}} \label{sec:newAlg}

\subsection{Algorithm \ref{alg:cardEstWithSubSampling} with Subsampling}

Algorithm \ref{alg:cardEstWithSampling} computes $\abs{E_1}$ precisely. To this end, it uses $O(l)$ storage units, which is linear in the sample size. We now show how to reduce this cost by approximating the value of $\abs{E_1}$ using a subsample $U$ of the sample $Y$ (see Figure \ref{fig:relationship}).
\begin{algorithm2}
\textbf{\newline (cardinality estimation with sampling and subsampling)}
\label{alg:cardEstWithSubSampling}\ \\
Same as Algorithm \ref{alg:cardEstWithSampling}, except that in step (b) the ratio $\abs{E_1}/l$ is estimated by invoking Procedure 2 using only $u \ll l$ storage units.
\end{algorithm2}

\begin{figure}[tbp]
\begin{center}
\epsfxsize=0.33\textwidth \epsffile{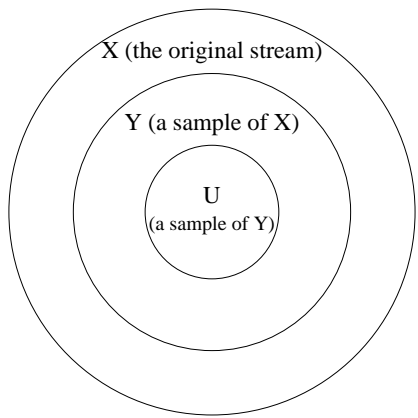}
\caption{The relationship between X, Y and U}
\label{fig:relationship}
\end{center}
\end{figure}

\begin{flushleft}
\textbf{Procedure 2:}
\\
\end{flushleft}
\begin{enumerate}
\item Uniformly subsample $u$ elements from the sampled stream $Y$. Let this subsample be $U$.
\item Compute (precisely) the number $\abs{U_1}$ of elements that appear only once in $U$.
\item Return $\widehat{P_0} = \abs{U_1}/u$.
\end{enumerate}

The intuition behind Algorithm \ref{alg:cardEstWithSubSampling} is that the cheap operation of Algorithm \ref{alg:cardEstWithSampling}, estimating $n_s$, is performed on the whole sample $Y$, whose length is $l$, while the expensive operation, computing the number of elements that appear only once ($\abs{E_1}$), is performed on a small subsample $U$ of length $u$, where $u \ll l$.

Uniform subsampling (step (1) in Procedure 2) can be implemented using one-pass reservoir sampling \cite{Vitter85}, as follows. First, initialize $U$ with the first $u$ elements of $Y$, namely, $y_1,y_2,\ldots,y_u$, and sort them in decreasing order of their hash values. When a new element is sampled into $Y$, its hash value is compared to the current maximal hash value of the elements in $U$. If the hash value of the new element is smaller than the current maximal hash value of $U$, the new value is stored in $U$ instead of the element with the maximal hash value. After all of the elements are treated and the sample $Y$ is created, $U$ is a uniform subsample of length $u$.

We now analyze the running time complexity of Algorithm \ref{alg:cardEstWithSubSampling}. Both steps (a) and (b) are performed using a simple pass over the sample $Y$, and require $O(1)$ operations per sampled element. Thus, these steps require $O(l)$ operations. Step (b) requires additional $O(u)$ operations for each insertion of an element into $U$. On the average, there are $O(\log l)$ such insertions. The total complexity is thus $O(l + u\cdot \log l) = O(l)$, which is similar to that of Algorithm \ref{alg:cardEstWithSampling}.
However, the main advantage of Algorithm \ref{alg:cardEstWithSubSampling} over Algorithm \ref{alg:cardEstWithSampling} is that it requires only $m+u$ storage units, while Algorithm \ref{alg:cardEstWithSampling} requires $m+l$ storage units, where $u \ll l$.

Next, we analyze the asymptotic bias and variance of Algorithm \ref{alg:cardEstWithSubSampling}, assuming that the HyperLogLog algorithm \cite{Flajolet2007} is used as Procedure 1. Then we generalize the analysis for any cardinality estimation procedure.

\subsection{Analysis of Algorithm \ref{alg:cardEstWithSubSampling}} \label{sub:alg2Ana}

Our main result is Theorem \ref{thm12}, which proves that the subsampling does not affect the asymptotic unbiasedness of the estimator and analyzes the effect of the sampling rate $P$ on the estimator's variance, with respect to the storage sizes $m$ and $u$.

Let $Z_i$ be the set of elements that appear exactly $i$ times in the subsample $U$; thus, $\sum \abs{Z_i} \cdot i = u$ and $Z_1$ is the set of elements that appear only once in $U$. $\abs{Z_1}$ can be written using indicator variables as:
\begin{equation*}
\abs{Z_1} = \sum_{j=1}^{u} {I_j}\text{, \:\:\:\:\: where}
\end{equation*}
$
I_j =
\left\{
\begin{array}{ll}
1 & \text{if the j'th element in $U$ has a single appearance in the subsample} \\
0 & \text{otherwise.} 
\end{array}
\right.
$

Consider the estimator $\widehat{\abs{E_1}/l}$ for $\abs{Z_1}/u$.
By definition, the variable $\abs{Z_1}$ follows a hypergeometric distribution, which can be relaxed to a binomial distribution if $u \ll l$ \cite{Shao1998}. Thus, due to binomial distribution properties, the expectation is
\begin{equation} \label{eq:expN12}
\Ex{\widehat{\abs{E_1}/l} \: \mid \: \abs{E_1}} = \Ex{\abs{Z_1}/u} = \Ex{I_j} = \abs{E_1}/l \text{,}
\end{equation}
and the variance is
\begin{equation} \label{eq:varN12}
\Var{\widehat{\abs{E_1}/l} \: \mid \: \abs{E_1}} = \Var{\abs{Z_1}/u} = 1/u \cdot \Var{I_j} = 1/u \cdot \abs{E_1}/l \cdot (1-\abs{E_1}/l) \text{.}
\end{equation} 
The following lemma summarizes the distribution of $P_0$:
\begin{lemma} \label{lemma:p0acc2} \ \\
$\widehat{P_0} \to \Normal{P_0}{\frac{1}{u}\Big(2P_0(1-P_0) + P_1\Big)}$.
\end{lemma}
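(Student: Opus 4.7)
The plan is to exploit the two-level randomness of $\widehat{P_0}=\abs{U_1}/u$: the outer sampling (producing $Y$ from $X$) determines $\abs{E_1}$, and the inner subsampling (producing $U$ from $Y$) determines $\abs{U_1}$ conditional on $Y$. Since (\ref{eq:expN12}) and (\ref{eq:varN12}) already give the conditional mean and variance of $\widehat{P_0}$ given $\abs{E_1}$, and Lemma \ref{lemma:p0acc} supplies the marginal behavior of $\abs{E_1}/l$, the natural tool is the law of total expectation/variance, conditioning on $\abs{E_1}$.

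For the mean, I would write
\[
\Ex{\widehat{P_0}}=\Ex{\Ex{\widehat{P_0}\mid\abs{E_1}}}=\Ex{\abs{E_1}/l}\to P_0,
\]
applying the tower property, then (\ref{eq:expN12}), then Good--Turing (as in the mean calculation of Lemma \ref{lemma:p0acc}). For the variance, I would decompose
\[
\Var{\widehat{P_0}}=\Ex{\Var{\widehat{P_0}\mid\abs{E_1}}}+\Var{\Ex{\widehat{P_0}\mid\abs{E_1}}}.
\]
By (\ref{eq:varN12}) and $\abs{E_1}/l\to P_0$, the first summand tends to $\frac{1}{u}P_0(1-P_0)$; by Lemma \ref{lemma:p0acc}, the second summand equals $\Var{\abs{E_1}/l}$ and hence converges to $\frac{1}{l}(P_0(1-P_0)+P_1)$, which, since $u\le l$, is bounded by $\frac{1}{u}(P_0(1-P_0)+P_1)$. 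Summing the two contributions yields the claimed variance $\frac{1}{u}(2P_0(1-P_0)+P_1)$.

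For the asymptotic normality, the plan is to argue that the two contributions to $\widehat{P_0}$ are individually asymptotically normal --- the conditional (approximately) Binomial subsampling error via the classical CLT, and the Good--Turing error $\abs{E_1}/l-P_0$ via Lemma \ref{lemma:N1acc} --- and uncorrelated by the tower property (the centered subsampling error has mean zero given $Y$), so their sum is asymptotically normal as well.

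The hardest part will be justifying this compound limit rigorously: one needs the hypergeometric-to-binomial relaxation invoked just before (\ref{eq:expN12}) to hold uniformly enough in the $u\ll l$ regime that conditional normality survives the outer marginalization, and a uniform-integrability-style argument to push Good--Turing through the outer $\Ex{\cdot}$ in the variance decomposition. The loose bound $1/l\le 1/u$ that produces the factor of $2$ in the stated variance is a convenience that absorbs the smaller second-order contribution into the dominant $1/u$ scale.
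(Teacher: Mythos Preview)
Your proposal is correct and follows essentially the same route as the paper: tower property for the mean via (\ref{eq:expN12}) and Good--Turing, and the law of total variance with the two pieces coming from (\ref{eq:varN12}) and Lemma~\ref{lemma:N1acc}. The only cosmetic difference is that the paper first combines the two variance pieces algebraically in terms of $\abs{E_1}/l$ and $\abs{E_2}/l$, obtaining $\tfrac{2}{u}\big((\abs{E_1}+\abs{E_2})/l-(\abs{E_1}/l)^2\big)$, and only then invokes Good--Turing, whereas you apply Good--Turing to each summand and then add; also, you make explicit the $1/l\le 1/u$ replacement that the paper performs silently when it writes the $\Var{\Ex{\cdot\mid\abs{E_1}}}$ term with a $1/u$ prefactor while citing Lemma~\ref{lemma:N1acc}.
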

\begin{proof} \ \\
For the expectation, the following holds
\begin{align*}
\Ex{\widehat{P_0}} = \Ex{\widehat{\abs{E_1}/l}} = \Ex{\Ex{\widehat{\abs{E_1}/l} \: \mid \: \abs{E_1}}} = \Ex{\abs{E_1}/l} = \abs{E_1}/l
\text{.}
\end{align*}
The first equality is due to Procedure 2. The second equality is due to the law of total expectation. The third equality is due to Eq. \ref{eq:expN12}. The fourth equality is due to Lemma \ref{lemma:N1acc}. 

By Good-Turing we get that $\abs{E_1}/l \to P_0$. For the variance, the following holds:
\begin{align*}
\Var{\widehat{P_0}} &= \Var{\widehat{\abs{E_1}/l}} \\ \nonumber 
&= \Var{\Ex{\widehat{\abs{E_1}/l} \: \mid \: \abs{E_1}}} + \Ex{\Var{\widehat{\abs{E_1}/l} \: \mid \: \abs{E_1}}} \\ \nonumber
&= 1/u \cdot ((\abs{E_1}+2\abs{E_2})/l - (\abs{E_1}/l)^2) + 1/u \cdot \abs{E_1}/l \cdot (1-\abs{E_1}/l) \\ \nonumber
&= 2/u \cdot ((\abs{E_1}+\abs{E_2})/l - (\abs{E_1}/l)^2) \nonumber 
\text{.}
\end{align*}
The first equality is due to Procedure 2. The second equality is due to the law of total variance. The third equality is due to Eq. \ref{eq:varN12} and Lemma \ref{lemma:N1acc}. The fourth equality is due to algebraic manipulations. 

By Good-Turing we get that $2/u \cdot ((\abs{E_1}+\abs{E_2})/l - (\abs{E_1}/l)^2) \to \frac{1}{u}\Big(2P_0(1-P_0) + P_1\Big)$.
\end{proof}

The following theorem states the asymptotic bias and variance of Algorithm \ref{alg:cardEstWithSubSampling} for $P<1$.

\begin{thm} \ \\
\label{thm12}
Algorithm \ref{alg:cardEstWithSubSampling} estimates $n$ with mean value $n$ and variance $\frac{n^2}{u}\frac{2P_0(1-P_0)+P_1}{(1-P_0)^2} + \frac{n^2}{m}$, namely, $\widehat{n} \to \Normal{n}{\frac{n^2}{u}\frac{2P_0(1-P_0)+P_1}{(1-P_0)^2} + \frac{n^2}{m}}$. 
In addition, $P_0$ and $P_1$ can be estimated as described in Theorem \ref{thm1}.
\end{thm}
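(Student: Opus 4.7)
The plan is to mimic the proof of Theorem \ref{thm1} almost step-by-step, replacing Lemma \ref{lemma:p0acc} with Lemma \ref{lemma:p0acc2} (which already bakes in the cost of the subsampling step) wherever the distribution of $\widehat{P_0}$ is invoked. Since Algorithm \ref{alg:cardEstWithSubSampling} differs from Algorithm \ref{alg:cardEstWithSampling} only in how $\widehat{P_0}$ is computed, and since $\widehat{n}$ factors in exactly the same way as $\widehat{n} = \widehat{n}_s / (1-\widehat{P_0})$, the overall proof architecture carries over intact.

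First, I would apply the Delta Method (Lemma \ref{lemma:deltaMethod}) to the function $g(x) = 1/(1-x)$, whose derivative is $g'(x) = 1/(1-x)^2$, using Lemma \ref{lemma:p0acc2} as input. This yields
\begin{equation*}
\frac{1}{1-\widehat{P_0}} \to \Normal{\frac{1}{1-P_0}}{\frac{1}{u}\frac{2P_0(1-P_0) + P_1}{(1-P_0)^4}}\text{.}
\end{equation*}
Second, since Procedure 1 (HyperLogLog) is invoked on the sample $Y$ with $m$ storage units exactly as in Algorithm \ref{alg:cardEstWithSampling}, I would reuse the distributional result $\widehat{n}_s \to \Normal{n_s}{n_s^2/m}$.

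Third, I would reproduce the covariance calculation from the proof of Theorem \ref{thm1}, which shows via the law of total covariance together with the conditional independence of $n_s$ and $n/n_s$ given $n_s$ that $\Cov{n_s}{1/(1-P_0)} = 0$. This argument is independent of the subsampling mechanism for $\widehat{P_0}$: conditioning on $n_s$ fixes the value of $1/(1-P_0) = n/n_s$, and the subsampling-based estimator $\widehat{P_0}$ and the HyperLogLog estimator $\widehat{n}_s$ use disjoint randomness. I can then invoke the product-distribution result (Lemma \ref{lemma:productDis}) on $\widehat{n}_s$ and $1/(1-\widehat{P_0})$ to obtain
\begin{equation*}
\widehat{n} \to \Normal{n}{\frac{n_s^2}{u}\frac{2P_0(1-P_0)+P_1}{(1-P_0)^4} + \frac{n_s^2}{m}\frac{1}{(1-P_0)^2}}\text{,}
\end{equation*}
and substituting $n_s = n(1-P_0)$ produces exactly the variance claimed in the theorem.

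Finally, for the expressions of $\Ex{P_0}$ and $\Ex{P_1}$, I would simply note that these depend only on the sampling rate $P$ and on the frequencies $f_i$ in the stream $X$, not on how $\widehat{P_0}$ is computed from the sample; hence the derivations from Theorem \ref{thm1} apply verbatim. I do not anticipate a serious obstacle here: the only subtle point is justifying the zero-covariance step (making sure the conditional-independence argument still applies once subsampling is introduced), but since subsampling only adds further independent randomness beyond that of Procedure 1, the argument goes through unchanged.
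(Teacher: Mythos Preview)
Your proposal is correct and follows essentially the same route as the paper: apply the Delta Method with Lemma \ref{lemma:p0acc2} in place of Lemma \ref{lemma:p0acc}, reuse $\widehat{n}_s \to \Normal{n_s}{n_s^2/m}$ and the zero-covariance argument from Theorem \ref{thm1}, combine via Lemma \ref{lemma:productDis}, and substitute $n_s = n(1-P_0)$. The paper's proof is in fact slightly terser---it simply refers back to Section \ref{sub:analysis} for the covariance step rather than rejustifying it---so your extra remark about the subsampling randomness being independent of Procedure 1 is a small bonus rather than a deviation.
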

\begin{proof} \ \\
Applying the Delta Method (see Section \ref{sub:analysis}) on $1-P_0$ yields that:
\begin{equation}\label{eq:1minus2}
\frac{1}{1-\widehat{P_0}} \to \Normal{\frac{1}{1-P_0}}{\frac{1}{u}\frac{2P_0(1-P_0) + P_1}{(1-P_0)^4}} \text{.}
\end{equation}
According to \cite{Cosma:2011}:
\begin{equation}\label{eq:v2}
\widehat{n}_s \to \Normal{n_s}{\frac{n_s^2}{m}} \text{.}
\end{equation}
Recall that $\Cov{n_s}{\frac{1}{1-P_0}} = 0$ (see Section \ref{sub:analysis}); applying the distribution product property (see Section \ref{sub:analysis}) for Eqs. (\ref{eq:1minus2}) and (\ref{eq:v2}) yields that: 
\begin{equation*}\label{eq:total2}
\widehat{n} = \frac{\widehat{n}_s}{1-\widehat{P_0}} \to \Normal{n}{\frac{n_s^2}{u}\frac{2P_0(1-P_0)+P_1}{(1-P_0)^4} + \frac{n_s^2}{m}\frac{1}{(1-P_0)^2}} \text{.}
\end{equation*}
Finally, substituting $n_s = n \cdot (1-P_0)$ yields that:
\begin{equation*}\label{eq:total2}
\widehat{n} \to \Normal{n}{\frac{n^2}{u}\frac{2P_0(1-P_0)+P_1}{(1-P_0)^2} + \frac{n^2}{m}} \text{.}
\end{equation*}

The resulting asymptotic variance depends on both $P_0$ and $P_1$, which are determined according to the sampling rate $P$ and $f_i$, the frequency of each distinct element in the stream, as was described in Section \ref{sub:analysis}.
\end{proof}

The analysis above assumes that the HyperLogLog algorithm \cite{Flajolet2007} is used as Procedure 1. Recall that the asymptotic relative efficiency (ARE) of cardinality estimator $\widehat{n}$ is defined as the ratio $\text{ARE} = \frac{n^2}{m} \cdot \frac{1}{\Var{\widehat{n}}}$. For example, the ARE of bottom-$m$ sketches \cite{Giroire2009} is $1.00$, and the ARE of the maximal-term sketch in \cite{Cosma:2011} is $0.93$. The following theorem generalizes Theorem \ref{thm12} for any cardinality estimation procedure.

\begin{thm} \ \\
\label{thm22}
Algorithm \ref{alg:cardEstWithSubSampling} estimates $n$ with mean value $n$ and variance $\frac{n^2}{u}\frac{2P_0(1-P_0)+P_1}{(1-P_0)^2} + \frac{1}{\text{ARE}}\frac{n^2}{m}$, namely, $\widehat{n} \to \Normal{n}{\frac{n^2}{u}\frac{2P_0(1-P_0)+P_1}{(1-P_0)^2} + \frac{1}{\text{ARE}}\frac{n^2}{m}}$, where \textup{ARE} is the asymptotic relative efficiency of Procedure 1.
In addition, $P_0$ and $P_1$ can be estimated as described in Theorem \ref{thm1}.
\end{thm}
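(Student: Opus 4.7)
The plan is to reuse the proof of Theorem \ref{thm12} almost verbatim, changing only the invocation of the asymptotic distribution of $\widehat{n}_s$ so that it reflects the ARE of a general Procedure 1 rather than the specific HyperLogLog constant. First, I would note that Lemma \ref{lemma:p0acc2} and the Delta Method (Lemma \ref{lemma:deltaMethod}) depend only on the subsampling process and on Good-Turing, and are completely independent of which cardinality estimator is chosen as Procedure 1. Hence equation (\ref{eq:1minus2}) continues to hold as is.

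Second, I would replace equation (\ref{eq:v2}) by the general statement
\begin{equation*}
\widehat{n}_s \to \Normal{n_s}{\tfrac{1}{\text{ARE}} \cdot \tfrac{n_s^2}{m}},
\end{equation*}
which is just the definition of the asymptotic relative efficiency $\text{ARE} = \frac{n_s^2}{m} \cdot \frac{1}{\Var{\widehat{n}_s}}$ rewritten as an asymptotic normal limit law (using the fact that the estimators considered in \cite{Cosma:2011, Flajolet2007, Giroire2009, Lumbroso2010} are known to be asymptotically normal around $n_s$).

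Third, I would observe that the zero-covariance argument $\Cov{n_s}{\frac{1}{1-P_0}} = 0$ carried out in the proof of Theorem \ref{thm1} is purely structural: it relies on the law of total covariance and on the independence of $n/n_s$ from $n_s$ once $n_s$ is fixed, and does not use any property of Procedure 1. Therefore it applies unchanged. Combining the two asymptotic normals via Lemma \ref{lemma:productDis} then yields
\begin{equation*}
\widehat{n} = \frac{\widehat{n}_s}{1-\widehat{P_0}} \to \Normal{n}{\frac{n_s^2}{u}\frac{2P_0(1-P_0)+P_1}{(1-P_0)^4} + \frac{1}{\text{ARE}}\frac{n_s^2}{m}\frac{1}{(1-P_0)^2}},
\end{equation*}
and substituting $n_s = n(1-P_0)$ gives exactly the claimed variance. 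The expressions for $\Ex{P_0}$ and $\Ex{P_1}$ are untouched since their derivation in Theorem \ref{thm1} depends only on the Bernoulli sampling rate $P$ and the frequency profile $\{f_i\}$, not on Procedure 1.

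There is no genuine obstacle: the only subtle point is making sure that the general Procedure 1 is in fact asymptotically normal (and not merely unbiased with the stated variance), so that Lemma \ref{lemma:productDis} can be invoked. This is satisfied by every estimator referenced in the paper, so I would simply state it as a standing assumption on Procedure 1 at the beginning of the proof.
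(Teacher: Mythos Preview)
Your proposal is correct and matches the paper's own approach: the paper simply states that the proof is identical to that of Theorem~\ref{thm12}, which is exactly what you lay out, with the sole change being the replacement of the HyperLogLog variance $n_s^2/m$ by the general $\tfrac{1}{\text{ARE}}\,n_s^2/m$. If anything, you have spelled out more detail than the paper, including the explicit assumption of asymptotic normality of Procedure~1.
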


The proof is identical to that of Theorem \ref{thm12}.

% ------------------------------------------------------------------------
% Simulations
% ------------------------------------------------------------------------
\section{Simulation Results} \label{sec:sim}

In this section we validate our analysis for the asymptotic bias and variance of Algorithm \ref{alg:cardEstWithSampling} and Algorithm \ref{alg:cardEstWithSubSampling}, as stated in Theorems \ref{thm1} and \ref{thm12} respectively. We implement both algorithms using the HyperLogLog \cite{Flajolet2007} as Procedure 1, and simulate a stream of $n$ distinct elements. Each distinct element $e_j$ appears $f_j$ times in the original (unsampled) stream. These frequencies are determined according to the following models:
\begin{enumerate}
\item Uniform distribution: The frequency of the elements is uniformly distributed between $100$ and $10,000$; i.e., $f_j \sim \Unif{10^2}{10^4}$.
\item Pareto distribution: The frequency of the elements follows the heavy-tailed rule with shape parameter $\alpha$ and scale parameter $s=500$; i.e., the frequency probability function is $p(f_j) = \alpha s^\alpha f^{-\alpha-1}$, where $\alpha>0$ and $f_j \ge s >0$. The scale parameter $s$ represents the smallest possible frequency.
\end{enumerate}

Pareto distribution has several unique properties. In particular, if $\alpha \le 2$, it has infinite variance, and if $\alpha \le 1$, it has infinite mean. 
As $\alpha$ decreases, a larger portion of the probability mass is in the tail of the distribution, and it is therefore useful when a small percentage of the population controls the majority of the measured quantity.

Table~\ref{table:alg1} presents the simulation results for Algorithm \ref{alg:cardEstWithSampling} using uniformly distributed frequencies. 
The number of distinct elements is $n=10,000$. Thus, the expected length of the original stream $ X $ is $ 10,000 \cdot \frac{100 + 10,000}{2} = 50.5 \cdot 10^6$. We examine two sampling rates: $P=1/100$ (Table \ref{table:alg1}(a)) and $P=1/1000$ (Table \ref{table:alg1}(b)). We use different $m$ values, and for every $m$ average the results over $200$ different runs. 
In each table row we present, for every $m$, the bias and the variance. The bias column is only from the simulations and it is always very close to $0$, as proven in our analysis. For the variance we have two values: one from the analysis (Theorem \ref{thm1}) and one from the simulations. 

The results in Table \ref{table:alg1} show very good agreement between the simulation results and our analysis. First, as already said, the bias values are all very close to $0$. Second, the simulation variance is always very close to the analyzed variance.

\begin{table}[ht]
\centering
\begin{subfigure}[b]{0.44\textwidth}
\begin{tabular}{|c|c|c|c|}
\hline 
\multirow{2}{*}{m} & \multirow{2}{*}{bias} & \multicolumn{2}{|c|}{variance} \\ \cline{3-4}
{} & {} & analysis & simulation \\\hline 
50 & 0.0023 & 0.0200 & 0.0191 \\\hline
100 & 0.0134 & 0.0100 & 0.0116 \\\hline 
150 & 0.0094 & 0.0067 & 0.0057 \\\hline 
\end{tabular}
\caption{$P=1/100$}
\label{table:alg1A}
\end{subfigure}
\begin{subfigure}[b]{0.44\textwidth}
\begin{tabular}{|c|c|c|c|}
\hline
\multirow{2}{*}{m} & \multirow{2}{*}{bias} & \multicolumn{2}{|c|}{variance} \\ \cline{3-4}
{} & {} & analysis & simulation \\\hline 
50 & 0.0141 & 0.0209 & 0.0174 \\\hline
100 & 0.0094 & 0.0114 & 0.0099 \\\hline 
150 & 0.0036 & 0.0096 & 0.0087 \\\hline 
\end{tabular}
\caption{$P=1/1000$}
\label{table:alg1B}
\end{subfigure}
\caption{Simulation results for Algorithm \ref{alg:cardEstWithSampling} using uniformly distributed frequencies}
\label{table:alg1}
\end{table}

Next, we consider Algorithm \ref{alg:cardEstWithSubSampling} and seek to validate Theorem \ref{thm12}.
Table \ref{table:unif} presents the simulation results for uniform distribution of the frequencies. The total storage budget is $200$ units, which are partitioned between $m$ and $u$. The number of distinct elements is $n=10,000$. We examine again two sampling rates: $P=1/100$ and $P=1/1000$. Table \ref{table:par} presents results for the Pareto distribution of the frequencies, with $\alpha=1.1$, $n=10,000$, $P=1/100$, and a total storage budget of $2,000$ units. The results are averaged again over $200$ runs, and the variance from the analysis is determined according to Theorem \ref{thm12}.

\begin{table}[ht]
\centering
\begin{subfigure}[b]{0.49\textwidth}
\begin{tabular}{|c|c|c|c|c|}
\hline 
\multirow{2}{*}{m} & \multirow{2}{*}{u} & \multirow{2}{*}{bias} & \multicolumn{2}{|c|}{variance} \\ \cline{4-5}
{} & {} & {} & analysis & simulation \\\hline 
10 & 190 & 0.0439 & 0.1000 & 0.1149 \\\hline 
50 & 150 & 0.0025 & 0.0200 & 0.0217 \\\hline 
100 & 100 & 0.0029 & 0.0101 & 0.0121 \\\hline 
150 & 50 & 0.0037 & 0.0068 & 0.0075 \\\hline 
190 & 10 & 0.0058 & 0.0060 & 0.0054 \\\hline 
\end{tabular}
\caption{$P=1/100$}
\label{table:unifA}
\end{subfigure}
\begin{subfigure}[b]{0.49\textwidth}
\begin{tabular}{|c|c|c|c|c|}
\hline
\multirow{2}{*}{m} & \multirow{2}{*}{u} & \multirow{2}{*}{bias} & \multicolumn{2}{|c|}{variance} \\ \cline{4-5}
{} & {} & {} & analysis & simulation \\\hline 
10 & 190 & 0.0093 & 0.1000 & 0.1081 \\\hline 
50 & 150 & 0.0184 & 0.0200 & 0.0199 \\\hline 
100 & 100 & 0.0114 & 0.0101 & 0.0118 \\\hline 
150 & 50 & 0.0060 & 0.0068 & 0.0059 \\\hline 
190 & 10 & 0.0142 & 0.0058 & 0.0053 \\\hline 
\end{tabular}
\caption{$P=1/1000$}
\label{table:unifB}
\end{subfigure}
\caption{Simulation results for Algorithm \ref{alg:cardEstWithSubSampling} using uniform distribution and $m + u = 200$ storage units}
\label{table:unif}
\end{table}

In both tables we see again that the bias is indeed practically $0$ and that the variance of the algorithm as found by the simulations is very close to the variance found by our analysis.
These results are very consistent, for both frequency distributions, both sampling rates, and all $m$ and $u$ values.
As expected, when $m+u$ increases (more storage is used), the variance decreases.

\begin{table}[ht]
\centering 
\begin{tabular}{|c|c|c|c|c|}
\hline
\multirow{2}{*}{m} & \multirow{2}{*}{u} & \multirow{2}{*}{bias} & \multicolumn{2}{|c|}{variance} \\ \cline{4-5}
{} & {} & {} & analysis & simulation \\\hline 
50 & 1950 & 0.00005 & 0.0200 & 0.0217 \\\hline 
100 & 1900 & 0.0189 & 0.0100 & 0.0104 \\\hline
500 & 1500 & 0.0011 & 0.0020 & 0.0023 \\\hline
1000 & 1000 & 0.00001 & 0.0010 & 0.0009 \\\hline
1500 & 500 & 0.0107 & 0.0007 & 0.0006 \\\hline 
\end{tabular}
\caption{Simulation results for Algorithm \ref{alg:cardEstWithSubSampling} using Pareto distribution and $m + u = 2000$ storage units}
\label{table:par}
\end{table}

We now want to compare the performance of Algorithms \ref{alg:cardEstWithSampling} and \ref{alg:cardEstWithSubSampling}. Recall that Algorithm \ref{alg:cardEstWithSubSampling} is expected to have a higher variance, but with significantly less storage. In Theorems \ref{thm1} and \ref{thm12} we got the following closed expressions for the relative variance of the algorithms:
\begin{enumerate}
\item Algorithm \ref{alg:cardEstWithSampling}: $\frac{1}{l}\frac{P_0(1-P_0)+P_1}{(1-P_0)^2} + \frac{1}{m}$.
\item Algorithm \ref{alg:cardEstWithSubSampling}: $\frac{1}{u}\frac{2P_0(1-P_0)+P_1}{(1-P_0)^2} + \frac{1}{m}$.
\end{enumerate}
Recall that $m+l$ is the total storage used by Algorithm \ref{alg:cardEstWithSampling} ($l$ is the sample length), and $m+u$ is the total storage used by Algorithm \ref{alg:cardEstWithSubSampling}. The probabilities $P_0$ and $P_1$ are determined according to the sampling rate $P$ and the frequency distribution of the distinct elements in the stream (see Theorem \ref{thm1}). Therefore, in a given stream, the only parameters that need to be determined by the user are $m$ in Algorithm \ref{alg:cardEstWithSampling}, and $m$ and $u$ in Algorithm \ref{alg:cardEstWithSubSampling}. In order to find the values of $m$ and $u$ that yield the minimal variance for a given input stream, one only needs to know the sampling rate and then minimize the relative variance function stated above.

Table \ref{table:compAlg} presents the simulation results for $n=10,000$, a uniform distribution of element frequencies, and for several sampling rates. 
Table \ref{table:compAlg}(a) presents the variance of Algorithm \ref{alg:cardEstWithSampling}. In each table row we present the sample length $l$, the value of $m$, the total storage used by the algorithm ($m+l$), and the simulation variance (averaged over $200$ different runs). Recall that in addition to $m$, Algorithm \ref{alg:cardEstWithSampling} uses $O(l)$ storage units for the exact computation of $\abs{E_1}$.
Table \ref{table:compAlg}(b) presents the minimal variance of Algorithm \ref{alg:cardEstWithSubSampling} as a function of $B$. $B$ indicates the total number of storage units we are willing to spend. In each table row we present the optimal partition of $B$ between $m$ and $u$ that minimizes the variance of the estimator, and the simulation variance for these $m$ and $u$ values. 
For the case where $P=1$ (no sampling), we provide in both tables the simulation variance of HyperLogLog \cite{Flajolet2007}, which we use as Procedure 1. This algorithm is the best known cardinality estimator and it has a relative variance of $\Var{\frac{\widehat{n}}{n}} \approx 1.08/m$ \cite{Flajolet2007}. In this case we do not provide the values of $l$, $m$ and $u$ as there is no meaning to these parameters because sampling is not used.

\begin{table}[ht]
\footnotesize
\centering
\begin{subfigure}[b]{0.49\textwidth}
\begin{tabular}{|c|c|c|c|c|}
\hline
\multirow{2}{*}{P} & \multicolumn{3}{|c|}{storage} & variance \\ \cline{2-4}
{} & m & l & total & (simulation) \\\hline
\multirow{3}{*}{1/100} & 100 & \multirow{3}{*}{505,000} & 505,100 & 0.0116 \\ \cline{2-2}\cline{4-5}
{} & 500 & {} & 505,500 & 0.0018 \\ \cline{2-2}\cline{4-5}
{} & 1000 & {} & 506,000 & 0.0009 \\\hline
\multirow{3}{*}{1/500} & 100 & \multirow{3}{*}{101,000} & 101,100 & 0.0095 \\ \cline{2-2}\cline{4-5}
{} & 500 & {} & 101,500 & 0.0021 \\ \cline{2-2}\cline{4-5}
{} & 1000 & {} & 102,000 & 0.0008 \\\hline 
\multirow{3}{*}{1/1000} & 100 & \multirow{3}{*}{50,500} & 50,600 & 0.0099 \\ \cline{2-2}\cline{4-5}
{} & 500 & {} & 51,000 & 0.0019 \\ \cline{2-2}\cline{4-5}
{} & 1000 & {} & 51,500 & 0.0008 \\\hline
\multirow{3}{*}{1} & 100 & - & 100 & 0.0101 \\ \cline{2-5}
{} & 500 & - & 500 & 0.0021 \\ \cline{2-5}
{} & 1000 & - & 1000 & 0.0010 \\\hline
\end{tabular}
\caption{Algorithm \ref{alg:cardEstWithSampling}}
\label{table:compAlg1}
\end{subfigure}
\begin{subfigure}[b]{0.49\textwidth}
\begin{tabular}{|c|c|c|c|c|}
\hline
\multirow{2}{*}{P} & \multicolumn{3}{|c|}{storage} & variance \\ \cline{2-4}
{} & B & m & u & (simulation) \\\hline 
\multirow{3}{*}{1/100} & 100 & 92 & 8 & 0.0112 \\ \cline{2-5}
{} & 500 & 460 & 40 & 0.0022 \\ \cline{2-5}
{} & 1000 & 921 & 79 & 0.0009 \\\hline
\multirow{3}{*}{1/500} & 100 & 80 & 20 & 0.0126 \\ \cline{2-5}
{} & 500 & 401 & 99 & 0.0027 \\ \cline{2-5}
{} & 1000 & 803 & 197 & 0.0011 \\\hline
\multirow{3}{*}{1/1000} & 100 & 72 & 28 & 0.0152 \\ \cline{2-5}
{} & 500 & 363 & 137 & 0.0031 \\ \cline{2-5}
{} & 1000 & 724 & 276 & 0.0013 \\\hline
\multirow{3}{*}{1} & 100 & - & - & 0.0101 \\ \cline{2-5}
{} & 500 & - & - & 0.0021 \\ \cline{2-5}
{} & 1000 & - & - & 0.0010 \\\hline
\end{tabular}
\caption{Algorithm \ref{alg:cardEstWithSubSampling}}
\label{table:compAlg2}
\end{subfigure}
\caption{Simulation results for Algorithms \ref{alg:cardEstWithSampling} and \ref{alg:cardEstWithSubSampling} using uniform distribution}
\label{table:compAlg}
\end{table}

We can easily see from the tables that the storage-variance trade-off of
Algorithm \ref{alg:cardEstWithSubSampling} is {\em significantly better\/} than that of Algorithm \ref{alg:cardEstWithSampling}. For example, the same variance ($0.011$) is obtained by both algorithms in the first row of $P=1/100$. However, in this row Algorithm \ref{alg:cardEstWithSampling} uses 505,100 storage units whereas Algorithm \ref{alg:cardEstWithSubSampling} uses only $100$. For $P=1/500$, we see that the same variance ($0.002$) is obtained by the two algorithms when Algorithm \ref{alg:cardEstWithSampling} uses 101,500 storage units while Algorithm \ref{alg:cardEstWithSubSampling} uses only $500$.

% ------------------------------------------------------------------------
% Conclusion
% ------------------------------------------------------------------------
\section{Conclusions} \label{sec:conclusion}

In this paper we studied the problem of estimating the number of distinct elements in a stream when only a small sample of the stream is given.
We presented Algorithm \ref{alg:cardEstWithSampling}, which combines a sampling process with a generic cardinality estimation procedure. The proposed algorithm consists of two steps: (a) cardinality estimation of the sampled stream using any known cardinality estimator; (b) estimation of the sampling ratio using Good-Turing frequency.
Then we presented an enhanced algorithm that uses subsampling in order to reduce the memory cost of Algorithm \ref{alg:cardEstWithSampling}.
We proved that both algorithms do not affect the asymptotic unbiasedness of the original estimator. We also analyzed the sampling effect on the asymptotic variance of the estimators. Finally, we presented simulation results that validate our analysis and showed how to find the optimal parameter values that yield the minimal variance.

% ------------------------------------------------------------------------
% Bibliography
% ------------------------------------------------------------------------
\bibliographystyle{abbrv}
\bibliography{samplingPaper}

\end{document}